\newtheorem{theorem}{Theorem}
\def\BibTeX{{\rm B\kern-.05em{\sc i\kern-.025em b}\kern-.08em
    T\kern-.1667em\lower.7ex\hbox{E}\kern-.125emX}}
\newcommand{\vSpacing}{\vspace*{0.1cm}}
\newcommand{\subHeadingS}[1]{ 
	\noindent \textbf{#1} 
}
\newcommand{\subHeading}[1]{
	\vSpacing
	\noindent \textbf{#1}
}
\title{Unified Scheduling for Predictable Communication Reliability in Cellular Networks with D2D Links}
\author{\IEEEauthorblockN{Yuwei Xie}
        \IEEEauthorblockA{Elec$.$ \& Comp$.$ Eng$.$ Department \\ 
        Iowa State University\\
         yuweix@iastate.edu}
        \and
        \IEEEauthorblockN{Hongwei Zhang}
        \IEEEauthorblockA{Elec$.$ \& Comp$.$ Eng$.$ Department \\
        Iowa State University\\
        hongwei@iastate.edu}
        \and
        \IEEEauthorblockN{Pengfei Ren}
        \IEEEauthorblockA{Computer Science Department \\ 
        Wayne State University\\
         pengfei@wayne.edu}}
\begin{document}
\bibliographystyle{IEEEtran}
\maketitle
\begin{abstract}
Cellular networks with D2D links are increasingly being explored for mission-critical applications (e.g., real-time control and AR/VR) which require predictable communication reliability. Thus it is critical to control interference among concurrent transmissions in a predictable manner to ensure the required communication reliability. To this end, we propose a Unified Cellular Scheduling (UCS) framework that, based on the Physical-Ratio-K (PRK) interference model, schedules uplink, downlink, and D2D transmissions in a unified manner to ensure predictable communication reliability while maximizing channel spatial reuse. UCS also provides a simple, effective approach to mode selection that maximizes the communication capacity for each involved communication pair. UCS effectively uses multiple channels for high throughput as well as resilience to channel fading and external interference. Leveraging the availability of base stations (BSes) as well as high-speed, out-of-band connectivity between BSes, UCS effectively orchestrates the functionalities of BSes and user equipment (UE) for light-weight control signaling and ease of incremental deployment and integration with existing cellular standards. We have implemented UCS using the open-source, standards-compliant cellular networking platform OpenAirInterface. We have validated the OpenAirInterface implementation using USRP B210 software-defined radios and lab deployment. We have also evaluated UCS through high-fidelity, at-scale simulation studies; we observe that UCS ensures predictable communication reliability while achieving a higher channel spatial reuse rate than existing mechanisms, and that the distributed UCS framework enables a channel spatial reuse rate statistically equal to that in the state-of-the-art centralized scheduling algorithm iOrder.
\end{abstract}

\section{Introduction} \label{sec:introduction}
LTE-Advanced Pro and 5G cellular networks with device-to-device (D2D) communications are increasingly being explored for mission-critical applications such as real-time control and AR/VR \cite{Erik-LTE-book:2016,5G-MTC:overview,5G-D2D:overview}. For these applications, predictable communication reliability is not only important by itself, it is also the basis of real-time communication since unpredictable communication reliability will make it difficult to ensure timely delivery of messages \cite{PRKS, PRK}. Controlling communication reliability in a predictable manner is also a basis for controlling the inherent trade-off between communication reliability, delay, and throughput, which is important for system-level optimization \cite{PRKS, control-comm-netRT}. Cellular communication, however, is subject to complex dynamics and uncertainties, and interference among concurrent transmissions is a major source of uncertainty \cite{PRKS, PRK}. For predictable communication reliability in mission-critical cellular networks, it is critical to schedule concurrent transmissions so that interference among them is controlled in a predictable manner.

\subHeading{Related work}
For controlling interference in cellular networks with D2D links, power control, channel assignment, and scheduling have been considered in existing studies \cite{Hyunkee:capacity-d2d, Qiaoyang:resource-d2d, Jeff:D2D-resource-allication-distributed, Xuemin:resource-control-d2d, Vincent:dynamic-power, Vincent:delay-aware-control-survey, D2D:Verenzuela2017, D2D:Lv2017, doppler2009device, peng2009interference, min2011capacity, yu2009performance, janis2009interference, feng2016qos}.   The existing cellular technology LTE has also defined the High Interference Indicator (HII) and Overload Indicator (OI) for uplinks as well as Relative Narrowband Transmit Power (RNTP) for downlinks in order for a cell to inform neighboring cells of the Resource Blocks (RBs) that are susceptible to interference \cite{4GLTE}. 
For more precise interference control in coordinated multi-point (CoMP) transmission and reception, LTE has also defined coordinated scheduling mechanisms such as dynamic point blanking which dynamically prevent transmission at certain time-frequency resource.  
The existing mechanisms, however, do not ensure predictable interference control and communication reliability due to the following reasons: considering single-cell settings only without addressing inter-cell interference \cite{Hyunkee:capacity-d2d, Qiaoyang:resource-d2d, Jeff:D2D-resource-allication-distributed, Xuemin:resource-control-d2d}, using inaccurate interference models \cite{Vincent:dynamic-power, Vincent:delay-aware-control-survey}, assuming uniform wireless channel fading across networks which is unrealistic in practice \cite{D2D:Verenzuela2017, D2D:Lv2017}, assuming exclusion-regions around receivers without mechanisms for identifying the exclusion-regions in practice \cite{Hyunkee:capacity-d2d}, not considering the interference from cellular-mode transmitters to D2D receivers \cite{doppler2009device}, not considering interference between D2D links \cite{peng2009interference}, not considering inter-cell interference among cellular and D2D links \cite{min2011capacity,yu2009performance,min2011reliability}, and/or maximizing throughput without addressing reliability-throughput tradeoff \cite{Hyunkee:capacity-d2d,Qiaoyang:resource-d2d,Jeff:D2D-resource-allication-distributed}. Several pieces of work \cite{min2011capacity,yu2009performance,min2011reliability} do not consider spectrum reuse across D2D links either, which unnecessarily reduces achievable network capacity. 

Wireless networks such as those based on WirelessHART, ISA100.11p, WIA-PA, and IETF 6TiSCH \cite{IndustrialNetBook-CRC,Lu:WirelessHART-Sch} have been studied for industrial applications. Focusing on low-rate wireless networks based on IEEE 802.15.4/4e, those studies do not focus on cellular networks with D2D links, nor have they focused on distributed scheduling with predictable interference control and maximum channel spatial reuse \cite{PRKS}. 

\subHeading{Contributions of this work}
Towards predictable communication reliability in industrial cellular networks with D2D links, we propose a Unified Cellular Scheduling (UCS) framework and we make the following contributions:
\begin{itemize}
\item Based on the Physical-Ratio-K (PRK) interference model which is suitable  for  developing  field-deployable  distributed  scheduling algorithms \cite{PRK}, our UCS framework schedules uplink, downlink, and  D2D  transmissions  in  a  unified  manner  to  ensure  predictable communication reliability while maximizing channel spatial reuse and allocating  communication  resources  to  uplink,  downlink,  and  D2D transmissions on a need basis. UCS also provides a simple, effective approach to mode selection that maximizes the communication capacity for each involved communication pair. 
\item Extending the distributed scheduling protocol PRKS \cite{PRKS} to multi-channel settings, UCS effectively uses multiple communication channels  for  high  throughput  as  well  as for resilience to channel fading and external interference. 
\item To leverage the computational power of base stations (BSes) as well as high-speed, out-of-band connectivity between BSes, UCS places the scheduling decisions at BSes and having UEs share their local state information  with  corresponding  BSes  at  relatively  low-frequencies. This  BS-UE  functional  orchestration  mechanism  enables light-weight control signaling, and it facilitates incremental deployment of UCS as well as technology evolution.
\item We  have  implemented  UCS  using  the  open-source,  standards-compliant  cellular  networking  platform  OpenAirInterface.  We have validated the OpenAirInterface implementation of UCS using USRP B210 software-defined radios and a small-scale deployment. We have also studied the behavior of UCS using at-scale, high-fidelity simulation. We have observed that, unlike existing mechanisms which cannot enable predictable communication reliability, UCS ensures predictable communication reliability while achieving higher channel spatial reuse rate. We have also observed that the distributed UCS scheduling framework enables a channel spatial reuse rate statistically equal to that in the state-of-the-art centralized scheduling algorithm iOrder \cite{che2014case}. 
\end{itemize}

The rest of the paper is organized as follows. Section~\ref{sec:preliminaries} presents the system model, problem specification, PRK interference model, and PRK-based scheduling protocol PRKS. Section~\ref{sec:UCS} presents our Unified Cellular Scheduling (UCS) framework. We evaluate UCS in Section~\ref{sec:simulationEval}, and we summarize our concluding remarks in Section~\ref{sec:conclusion}.

\section{Preliminaries} \label{sec:preliminaries}
\subsection{System Model and Problem Specification}
We consider cellular networks of multiple cells where each cell has a Base Station (BS) and a number of user equipment (UEs). Each cell has a set of uplinks (i.e., transmissions from UEs to the BS) and downlinks (i.e., transmissions from the BS to UEs). A UE may also transmit data to another UE in the network. The transmission from one UE to another can be in cellular mode (i.e., an uplink transmission followed by a downlink transmission) or in D2D mode where the transmitter UE sends data directly to the receiver UE without using any BS in data delivery. If a UE sends data directly to another UE, we regard the communication link as a D2D link. In line with the current wireless systems, e.g., LTE-type systems, the basic resource allocation unit is Resource Block (BS), which consists of 12 consecutive subcarriers in the frequency domain and one 0.5ms time slot in the time domain, with each subcarrier occupying a 15KHz spectrum and the central frequencies of two consecutive subcarriers separated by 15KHz. For convenience of exposition, we regard the 12 consecutive subcarriers of a RB as one carrier. According to the LTE standard, each cell may use multiple component carriers, with each component carrier consisting of 6 - 100 carriers (i.e., with bandwidth ranging from 1.4MHz to 20MHz) \cite{4GLTE}. For reducing scheduling overhead, the LTE standard also groups a certain number carriers into a carrier group, and the specific grouping methods depend on the bandwidth of a component carrier.

The uplinks, downlinks, and D2D links of a cellular network share the wireless spectrum available. The objectives of this work are to develop 1) an algorithm that, for each UE-to-UE communication pair, decides whether the communication shall be in cellular mode or D2D mode for maximum communication throughput while satisfying the required communication reliability, and 2) an algorithm that, given a time slot and the set of uplinks, downlinks, and D2D links (if any), schedules a maximal subset of the links to transmit at the time slot so that the required communication reliability is guaranteed.

As a first-step towards field-deployable solutions that ensure predictable communication reliability in cellular networks with D2D links, we assume that UEs are static or move slowly such that the average channel gain and the average background noise power tend to be stable at timescales of seconds, minutes or even longer (e.g., in industrial network settings of slow motion) \cite{5G-MTC:overview,PRKS}. Focusing on the problem of transmission scheduling, we assume that transmission power for each node (i.e., BS or UE) is fixed, even though different nodes may use different transmission powers. For highly mobile networks (e.g., those with vehicles) and transmission power control, techniques such as those by Li et al$.$ \cite{CPS-IOTDI18} and Wang et al$.$ \cite{Zhang:pktReliability} may be applied, but detailed study is beyond the scope of this work.

\subsection{Interference Model} \label{subsec:interferenceModel}

\begin{wrapfigure}{r}{3.5cm}
	\includegraphics[width=.99\linewidth]{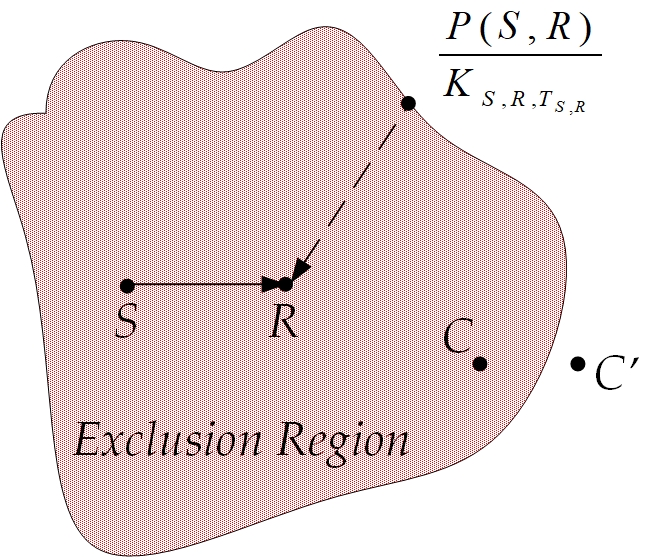}
	\caption{PRK interference model} \label{fig:PRK-model}
	\vspace*{-0.1cm}
\end{wrapfigure}
For predictable interference control in transmission scheduling, we adopt the Physical-Ratio-K (PRK) interference model \cite{4GLTE} in our study. The PRK model integrates the protocol 
model's locality with the physical model's high-fidelity, and 
it is suitable for designing distributed scheduling protocols that ensure predictable interference control in the presence of dynamics and uncertainties. 
As shown in Figure~\ref{fig:PRK-model},
in the PRK model,  
a node $C'$ is regarded as not interfering and thus can transmit concurrently with the transmission form another node $S$ to its receiver $R$ if and only if $P(C',R)<\frac{P(S,R)}{K_{S,R,T_{S,R}}}$, 
where $P(C',R)$ and $P(S,R)$ is the average strength of signals reaching R from $C'$ and $S$ respectively, $K_{S,R,T_{S,R}}$ is the
minimum real number chosen such that, in the presence of
cumulative interference from all concurrent transmitters, the
probability for $R$ to successfully receive packets from $S$ is
no less than the minimum link reliability $T_{S,R}$ required by applications.

For predictable interference control, the parameter $K_{S,R,T_{S,R}}$ of the PRK model needs to be instantiated for every link $(S,R)$ according to in-situ, potentially unpredictable network and environmental conditions. To this end, Zhang et al \cite{PRKS} have formulated the PRK model instantiation problem as a regulation control problem where the ``plant`` is the link $(S,R)$, the ``reference input`` is the required link reliability $T_{S,R}$, the ``output`` is the actual link reliability $Y_{S,R}$ from $S$ to $R$, the ``control input`` is the PRK model parameter $K_{S,R,T_{S,R}}$, and the objective of the regulation control is to adjust the control input so that the plant output is no less than the reference input \cite{PRKS}. For every link $(S,R)$, using its instantiated PRK model parameter $K_{S,R,T_{S,R}}$ and the local signal maps that contain the average signal power attenuation between $S$, $R$ and every other close-by node $C$ that may interfere with the transmission from $S$ to $R$, link $(S,R)$ and every close-by node $C$ become aware of their mutual interference relations. Based on nodes/links’ mutual interference relations, non-interfering transmissions can be scheduled to ensure the required communication reliability across individual links. 

PRK-based scheduling has been shown to enable predictable interference control in single-channel ad hoc networks \cite{PRKS,CPS-IOTDI18}. In this work, we will verify the suitability and address the challenges of applying the PRK model to scheduling in multi-channel cellular networks with D2D links.

\section{Unified Cellular Scheduling Framework}\label{sec:UCS}

\subsection{Overview} \label{subsec:UCS-overview} 

For scheduling with predictable communication reliability in cellular networks, a fundamental task is to identify the interference relations between uplinks, downlinks, and D2D links (if any). 
Given that the PRK interference model is a high-fidelity model specifically designed for distributed protocol design in dynamic, uncertain network settings \cite{PRK} and considering the demonstrated predictable interference control in PRK-based scheduling for single-channel ad hoc networks \cite{PRKS}, we adopt the PRK interference model in our design. The PRK model is a generic model, and it is applicable to communication links of different technologies. In particular, the impact of different communication technologies (e.g., modulation and coding schemes, multi-antenna systems) is captured by the relation between the packet-delivery-reliability (PDR) and signal-to-interference-plus-noise-ratio (SINR) for a link, and the PDR-SINR relation is used to instantiate the PRK model for each link \cite{PRKS}. Therefore, with the PRK model instantiated for the uplinks, downlinks, and D2D links (if any) of a cellular network, the PRK model serves as a unified approach to modeling interference relations between uplinks, downlinks, and D2D links despite the differences between these links (e.g., different types of transmitter/receiver radios). Accordingly, with inter-link interference relations identified, the cellular network scheduling problem is transformed into a unified problem of identifying maximal independent sets in a conflict graph capturing the inter-link interference relations, thus enabling maximizing spectrum spatial reuse while ensuring predictable communication reliability. 

The availability of multiple communication channels (e.g., carriers) in cellular networks introduces the scalability challenge of PRK-based scheduling (e.g., in control signaling overhead), and it also provides the opportunity of channel-hopping for increased resilience against channel fading and external interference. In general, a link may maintain one PRK model parameter $K$ for a group of $n$ channels (e.g., carrier, carrier group, or component carrier), and the choice of $n$ reflects the tradeoff between control signaling overhead, data communication performance, and ease of implementation with existing LTE standard framework and thus incremental deployment. We will analyze the tradeoff in Section~\ref{PRK adaptation}, and we will present a PRK-based multi-channel scheduling algorithm that leverages channel hopping to balance communication load across multiple channels and to increase resilience against channel fading and external interference. 

The PRK model unifies the scheduling of uplinks, downlinks, and D2D links in cellular networks, thus PRK-based cellular network scheduling bears similarity to that in ad hoc networks and provides a unified framework for reasoning about interference-control-oriented wireless network scheduling. In addition, the availability of BSes and high-speed, out-of-band interconnections between BSes (e.g., wired optical networks) in cellular networks provide unique opportunities of orchestrating the functionalities of BSes and UEs in ways to reduce control signaling overhead and to facilitate incremental deployment and technology evolution. 

Figure~\ref{fig:UCS-arch} shows the architecture of the Unified Cellular Scheduling (UCS) framework. 
\begin{figure}[!htbp]
	\vspace*{-0.05in}
\centering
\includegraphics[width=2.5in]{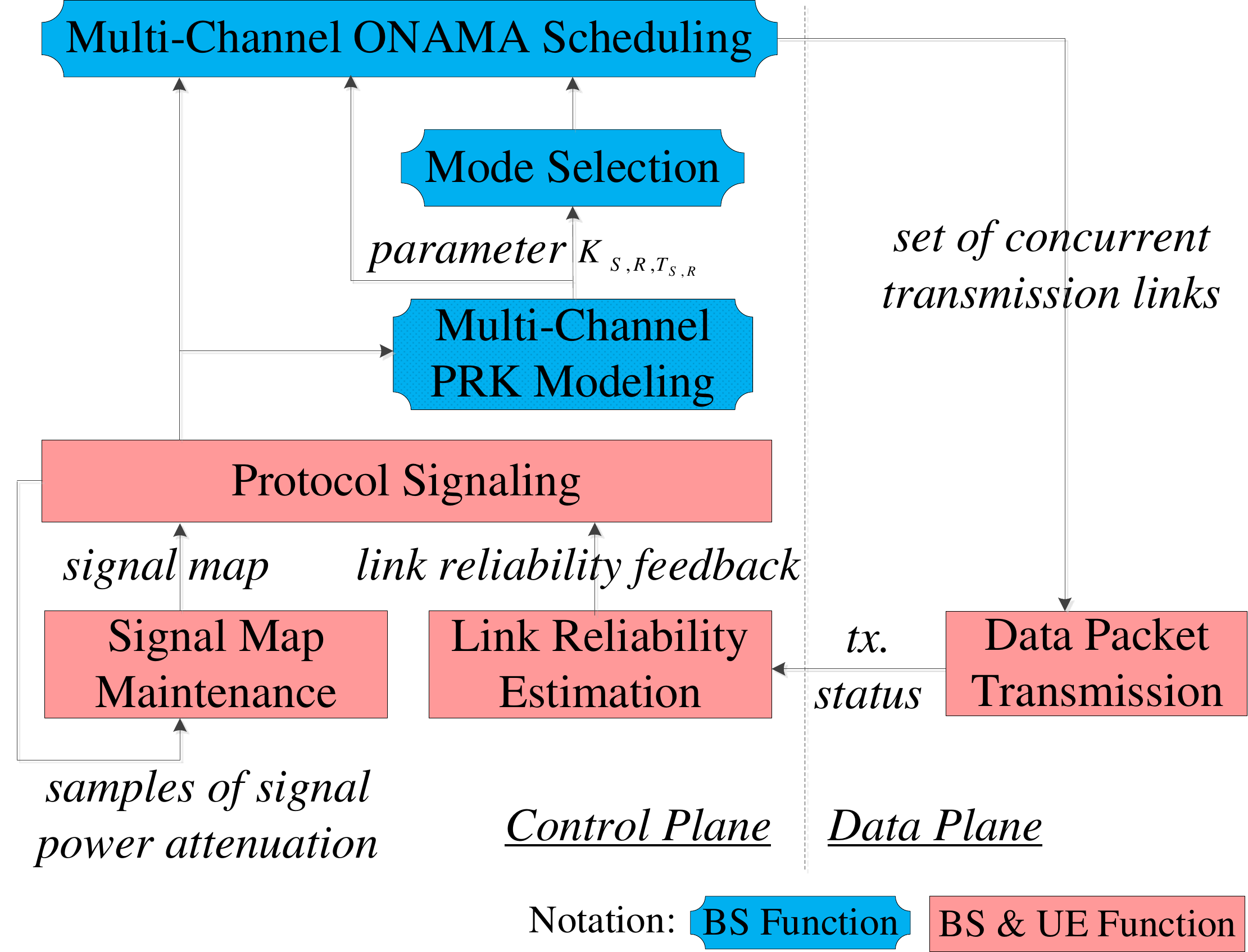}
\caption{UCS architecture} \label{fig:UCS-arch}
\vspace*{-0.1in}
\end{figure}  
In the architecture, based on the status (i.e., success or failure) of uplink transmissions as well as downlink and D2D link transmissions, the BSes and UEs estimate the uplink communication reliability as well as the downlink and D2D link communication reliability respectively. Then, through collaborative control signaling, close-by nodes (i.e., BSes and UEs) estimate the average channel gain between themselves and use local signal maps to record such estimates. To facilitate incremental deployment and technology evolution, we keep the functionality of UEs to the minimum, and UEs share their communication reliability estimates and local signal maps with their respective BSes. This way, the BSes collectively have all the information needed to estimate the PRK model parameters of uplinks, downlinks, and D2D links and then make decisions on mode selection and transmission scheduling. Then, each BS can inform the corresponding UEs of the transmission modes and schedules using existing LTE signaling mechanisms or their simple extensions. 
	In what follows, we elaborate on the design and key components of the UCS framework.

\subsection{Multi-Channel PRK Modeling} \label{PRK adaptation}

PRK-based scheduling guarantees predictable communication reliability by maintaining one PRK model parameter $K$ for each channel in ad hoc networks, but it is neither practical nor necessary in cellular networks with D2D links. In the LTE-type systems, each component carriers may contain up to 100 carriers and each cell may have multiple component carriers, which means there are hundreds of channels for each link. For maintaining one $K$ for each channel, a node needs to transmit the average channel gain between itself and neighboring nodes for each channel to the BS. In this situation, the number of pieces of control information can reach tens of thousands in each feedback period. The excessive control information will occupy resources for data transmission, resulting in scheduling delay and reducing overall system throughput. Consequently, maintaining one $K$ for each channel is not practical in the LTE-type cellular networks.

To reduce control signaling overhead, LTE defines the carrier group as the scheduling unit depending on the bandwidth of the component carrier, which means that BSes do not need to know the channel gain for each carrier. Maintaining one $K$ for each channel in cellular networks may not be able to further improve the accuracy of interference control, and it will also introduce control signaling overhead.

Maintaining one $K$ for too many carriers, however, will increase the number of control steps needed for the system to converge. When a new link requests transmissions or when environment changes, inaccurate $K$ may incorrectly add nodes to an exclusion region or delete nodes from it due to the differences of those carriers in the frequency domain, which means the scheduler needs more steps to make the value of $K$ converge. The average link communication reliability will be affected by the increased convergence time. Consequently, the number of carriers sharing a common PRK model parameter $K$ should be properly selected to guarantee control accuracy.

UCS maintains one $K$ for a certain number of adjacent carriers on one component carrier to guarantee predictable communication reliability. As we will show in Section~\ref{subsec:multi-channel-ONAMA}, the availability of multiple carriers for a shared $K$ enables UCS to the channel hopping to increase resilience against channel fading and external interference for improved communication reliability. Experiments have shown that a suitable number of carriers sharing one $K$ can not only guarantee predictable communication reliability but reduce control signaling overhead greatly. For example, consider the scenario where $20$ UEs share a component carrier with 100 carriers. If it maintains one $K$ for each carrier, each UE needs to transmit  $19\times 100=1,900$ pieces of information about average channel gains, and the whole system needs to transmit $1,900\times 20=38,000$ pieces of average channel gain information during each feedback period. If it maintains one $K$ for every $25$ adjacent carriers while also guaranteeing predictable communication reliability, each UE just needs to transmit $19\times 4=76$ average channel gain information and the whole system needs to transmit $76\times 20=1,520$ pieces of average channel gain information during each feedback period. The control information will be reduced by more than $90\%$.

For each link $(S, R)$, the PRK model parameter $K$ is initialized such that the initial exclusion region around the receiver $R$ includes every strong interferer whose concurrent transmission alone in the same carrier as that of $(S, R)$ can make the communication reliability along $(S, R)$ drop below the required reliability $T_{S,R}$. 
	After its initialization, the PRK model parameter is adapted according to in-situ network and environmental conditions to ensure the required communication reliability, and the adaptation uses the regulation feedback control mechanism of Zhang et al$.$ \cite{PRKS} which we have discussed in Section~\ref{subsec:interferenceModel}. The method of Zhang et al$.$ \cite{PRKS} was proposed for single-carrier networks. In cellular networks of $N$ wireless carriers and using the scheduling algorithm to be presented in Section~\ref{subsec:multi-channel-ONAMA} shortly, a link uses a specific carrier at a time slot in probability $\frac{1}{N}$. Assuming that the average strength of the interference signal from an interferer $C$ to the receiver $R$ is $P(C, R)$ when both $C$ and $R$ are in the same communication carrier, the expected interference from $C$ to $R$ shall be computed as $\frac{P(C, R)}{N}$ (instead of as $P(C, R)$ in the original feedback control mechanism \cite{PRKS}) since $C$ may not use the same carrier as $R$.

\subsection{Mode Selection}
The aim of mode selection for UE-to-UE communication pairs is to accommodate as many concurrently-transmitting links as possible for a given communication reliability requirement. Consequently, a UE needs to choose a transmission mode that allows more links to transmit concurrently with it. To this end, the mode selection question becomes a comparison of the number of concurrent links between D2D transmission mode and cellular transmission mode. The PRK model provides a good basis for addressing this issue because the value of $K$ exactly determines which links can or cannot transmit concurrently.

From the PRK interference model, each link maintains a $K$ to satisfy its reliability requirement, and thus each $K$ defines an exclusion region $\mathbb{E}$. For each link $(S,R)$, node $C$ is in the exclusion region of the receiver $R$ and thus shall not transmit concurrently with the transmission from $S$ to $R$ if and only if $P(C,R)\ge \frac{P(S,R)}{K_{S,R,T_{S,R}}}$ holds. For the D2D transmission mode, only one transmission link is needed, from the transmitter to the receiver directly. For the cellular transmission mode, two transmission links are required - an uplink followed by a downlink. This means that a UE needs to compare the number of interference links of the D2D link with the total number of interference links of both the uplink and downlink to make the decision. Consequently, the mode selection algorithm is based on the number of nodes in the exclusion region defined by $K$. When a pair of UEs request transmissions, the BS calculates the exclusion region for the D2D link ($\mathbb{E}_{D2D}$), uplink ($\mathbb{E}_{Up}$) and downlink ($\mathbb{E}_{Down}$) respectively according to their PRK model parameters. If $\vert\mathbb{E}_{D2D}\vert>\vert\mathbb{E}_{Up}\vert+ \vert\mathbb{E}_{Down}\vert$, the UEs shall communicate through the BS in the cellular mode. If $\vert\mathbb{E}_{D2D}\vert<\vert\mathbb{E}_{Up}\vert+ \vert\mathbb{E}_{Down}\vert$, the UEs shall communicate directly in the D2D mode.

To realize the above design in practice, we shall consider the fact that the values of $\vert\mathbb{E}_{D2D}\vert$ and $ \vert\mathbb{E}_{Up}\vert + \vert\mathbb{E}_{Down}\vert$ for a UE-to-UE communication pair have to be learned by the UEs, the UEs cannot learn about $\vert\mathbb{E}_{D2D}\vert$(or $ \vert\mathbb{E}_{Up}\vert + \vert\mathbb{E}_{Down}\vert$) unless they communicate in the D2D (or cellular) mode, and $\vert\mathbb{E}_{D2D}\vert$ as well as  $ \vert\mathbb{E}_{Up}\vert + \vert\mathbb{E}_{Down}\vert$ are potentially time-varying. In particular, the mode selection problem considering these real-world challenges can be modeled as a restless multi-armed bandit (MAB) problem \cite{tekin2015online}. A MAB problem can be seen as a set of real distributions $B=\{R_1,...,R_K\}$, with each distribution $R_k (1 \le k \le K)$ being associated with the rewards delivered by the $k$-th arm (i.e., decision options) and having a mean value of $\mu_k$. A gambler iteratively plays one arm at a time and collects the associated reward. The objective is to maximize the sum of the collected rewards. For a given play strategy, the realized regret $\rho$ after $T$ plays is defined as follows: 
\[
R_{real}(T)=T\mu^*-\sum_{t=1}^T(\mu_{\alpha_t}-\beta_t c_{\alpha_t})
\]
where $\mu^* = \max_{k=1}^K \mu_k$, $\alpha_t$ denotes the arm selected at time $t$, $c_{\alpha_t}$ is the cost of observing the reward of arm $\alpha_t$, $\beta_t$ is equal to 1 if the gambler observes the reward of $\alpha_t$ at time $t$ and $0$ otherwise. 

In the mode selection problem, we can treat the rewards of the D2D mode and cellular mode as $-\vert\mathbb{E}_{D2D}\vert$ and $ -\vert\mathbb{E}_{Up}\vert - \vert\mathbb{E}_{Down}\vert$ respectively, and then we can apply the History-Dependent Sequencing of Exploration and Exploitation (HD-SEE) \cite{tekin2015online} algorithm in mode selection. 
In particular, let $S_{k,t}$ be the number of reward observations for arm $k$ by time $t$ and $N_{k,t}$ be the number of times arm $k$ has been selected by time $t$. The realized regret can be rewritten as follows:
\[
R_{real}(T)=\sum_{k=1}^K(\vartriangle_kN_{k,T}+c_kS_{k,T})
\]
where $\triangle_k=\mu^*-\mu_k$ is called the suboptimality gap of arm $k$.
At each time $t$, the algorithm starts by calculating the estimated optimal arm: 
\[
\hat{k}_t^*=\mathop{\arg\max}_{k\in\mathcal{K}}\hat{\mu}_{k,t},
\]
where $\hat{\mu}_{k,t}$ is the estimated mean reward of arm $k$ by time $t$. 
Denoting the set of available arms as $\mathcal{K}$, then, for each arm $k\in \mathcal{K}-k_t^*$, it calculates the estimated suboptimality gap
\[
\hat{\vartriangle}_{k,t}=\hat{\mu}_{k_t^*,t}-\hat{\mu}_{k,t}.
\]
For each arm $k\in \mathcal{K}$, a control number $D_{k,t}$ is calculated based on the estimated suboptimality gap and the number of times that arm has been explored. For $k\in \mathcal{K}-k_t^*$ the control number is given as
\[
D_{k,t}=\frac{L_2\log(tK/\delta)}{J_{k,t}^2}
\]
where
\[
J_{k,t}=max\left\{0,\hat{\triangle}_{k,t}-2\sqrt{\frac{L_1\log(tK/\delta)}{\min(S_{k,t},S_{k_t^*,t})}}\right\}.
\]
Here $L_1>0$ and $L_2>0$ are constants. The control number for the estimated optimal arm $k_t^*$ is calculated as
\[
D_{k_t^*,t}=\frac{L_2\log(tK/\delta)}{\min_{k\in \mathcal{K}-k_t^*}J_{k,t}^2}
\]
For the selection process, if $S_{k,t}\geqslant D_{k,t}$ for all $k\in \mathcal{K}$, the algorithm chooses the arm $\hat{k}_t^*$, if not, the algorithm randomly chooses an arm for which $S_{k,t}<D_{k,t}$.In the mode selection problem, each BS only has two arms. Therefore, if $S_{k,t}\geqslant D_{k,t}$ holds for the suboptimal arm, the BS chooses the optimal arm; otherwise, the BS chooses the suboptimal arm.

In the HD-SEE algorithm, the number of times a suboptimal arm is chosen depends on the suboptimality gap of the arm. An arm with a larger suboptimality gap will be explored (i.e., chosen) fewer number of times compared to an arm with a smaller suboptimality gap. The regret in the HD-SEE algorithm is logarithmic in time, which is the best possible \cite{tekin2015online}.

\subsection{BS and UE Functional Orchestration}
\label{subsec:functionalOrchestration}
The main difference between cellular networks and ad hoc networks is the availability of BSes in cellular networks. Cellular networks also provide high-speed, out-of-band interconnections between BSes to exchange control signaling information (e.g., those needed for transmission scheduling). The availability of BSes and the high-speed, out-of-band interconnections in between provides the opportunity of having each BS collect information about network state information in its cell (e.g., local signal maps containing wireless channel gains and communication reliability across different links) through existing LTE BS-UE control signaling mechanisms, having BSes coordinate with one another in deciding network-wide transmission schedules, and then having each BS inform UEs in its cell of their transmission modes and schedules. This approach of placing core intelligence (i.e., decision making logic) at BSes and keeping UE’s functionality to the minimum of estimating communication reliability and maintaining local signal maps helps facilitate incremental deployment and technology evolution; this is because the number of BSes tend to be much less than that of UEs.

UCS also utilizes the inter-cell interference coordination (ICIC) mechanism of cellular networks to reduce control signaling overhead. The ICIC mechanism transmits some indicator message using the X2 interface to help with the scheduling process of neighboring BSes. The transmission along the X2 interface usually adopts high-speed wired networks such as optical fiber networks and does not consume wireless spectrum. Therefore, X2 interface transmission can be used to further reduce control signaling overhead in wireless channels. From the PRK interference model, to calculate the exclusion region for each link, the BS needs to know the receiver's local signal map and the value of $Ks$ of the close-by links. However, some close-by links are in the neighboring cells and thus require inter-cell coordination. In UCS, the values of the PRK model parameter $K$ and local signal maps are treated as the data part of the ICIC-related messages, and they are transmitted to the neighboring cell using the X2 interface to avoid using wireless resources. By the ICIC mechanism, transmission of inter-cell coordination control information no longer requires wireless transmissions.

\subsection{Multi-Channel ONAMA Scheduling} \label{subsec:multi-channel-ONAMA}

Based on the interference relations between the uplinks, downlinks, and D2D links (if any as a result of mode selection) as identified by the PRK interference model, data transmissions along all the links can be scheduled in a unified manner to fully utilize the available wireless communication carriers. In particular, the objective of the unified scheduler is to schedule data transmissions so that a maximal set of non-interfering links are scheduled to transmit at each carrier and each time slot and that a link is scheduled to transmit only if there exists at least one data packet queued for transmission at the beginning of a time slot. Unlike existing cellular network scheduling algorithms which are based on a limited set of preconfigured frequency-division-duplexing (FDD) or time-division-duplexing (TDD) transmission patterns, the unified scheduler is adaptive to application traffic demand, and it ensures predictable communication reliability by respecting the interference relations as identified by the PRK model.

More specifically, the unified scheduler is based on the ONAMA TDMA scheduling algorithm \cite{liu2015maximal}. ONAMA schedules a maximal set of non-interfering links to transmit at each time slot, but it is designed for single-carrier wireless networks, and it is not adaptive to traffic demand. In this study, we extend the ONAMA algorithm to consider the specific cellular network properties such as the availability of multiple carriers and base stations (BSes) as well as the traffic demand across individual links. Based on the BS and UE functional orchestration mechanism presented in Section~\ref{subsec:functionalOrchestration}, each BS $I$ knows the set of transmitters in its cell and the associated links $L_I$ (whose receivers may be in a neighboring cell), and, for each such link $i$, the BS also knows the set of links $M_i$ that interfere with link $i$. For each time slot $t$, the BS can also get the traffic demand $d_i$ (i.e., number of data packets queued for transmission) for each link $i \in L_I$, and the BS knows the set of available carriers $RB$. Then, the multi-channel transmission schedule for each time slot is identified by the BSes in a distributed manner as follows: 
\begin{enumerate}[1)]
\item Each BS $I$ initializes the state of each link $i \in L_I$ as  UNDECIDED for each carrier $rb \in RB$, and $I$ also sets the state of every link in $M_i$ as UNDECIDED for each carrier; 
\item For each link $i \in L_I$, the BS $I$ computes a priority for each link $k \in M_i\cup i$ and each carrier $rb \in RB$ for $d_k$ times:
\[
Prio.k.rb.d = Hash(k\oplus d\oplus t\oplus rb)\oplus k\oplus d, 1\leq d\leq d_k, 
\]
where $Hash(x)$ is a message digest generator that returns a random integer by hashing $x$. Note the fourth and fifth XOR operator $\oplus$ are  necessary for guaranteeing that all links' priorities are distinct even when $Hash()$ returns the same number on different inputs. This also means that a link with demand $d_k$ will have $d_k$ different priorities.

\item For each link $i \in L_I$, the BS $I$ computes a priority for each link $k \in M_i\cup i$ and each carrier $rb \in RB$:
\[
Prio.k.rb =  Max_{d=1}^{d_k} Prio.k.rb.d
\]
That is, each link $k \in M_i\cup i$ maintains a specific priority for each available carrier $rb$.

\item  For each link $i \in L_I$, the BS $I$ iterates the following steps until the state of $i$ in each carrier is either ACTIVE or INACTIVE: A) for the carriers in which the state of $i$ is UNDECIDED, $I$ tries to assign a different state to $i$ in the increasing order of the IDs of the carriers; for a given carrier $rb$, if $i$'s priority is higher than that of every other ACTIVE and UNDECIDED member in $M_i$, $i$'s state is set as ACTIVE in carrier $rb$, and its traffic demand $d_i$ is reduced by one; conversely, if any ACTIVE member of $M_i\cup i$ has a higher priority than $i$, the state of $i$ in carrier $rb$ is set as INACTIVE; if the traffic demand of $i$ becomes zero, $i$'s state is set as INACTIVE for each carrier in which its state is UNDECIDED; B) the BS $I$ shared the state of $i$ with other BSes whose cells may have links that interfere with $i$.

\item  If the state of a link $i$ is ACTIVE for carrier $rb$ at time slot $t$, link $i$ can transmit a data packet at carrier $rb$ and time slot $t$.
\end{enumerate}
The details of the above multi-channel ONAMA scheduling algorithm for time slot $t$ are shown in Algorithm~\ref{algo:mcONAMA}.
{\small 
\begin{algorithm}[!tb]
  \caption{Multi-Channel ONAMA Scheduling at BS $I$}
  \label{algo:mcONAMA}
   \begin{algorithmic}[1]
   \item[] $M_i$: set of interfering links of a link $i \in L_I$;
   \item[] $d_k$: traffic demand of link $k \in M_i \cup i$; 
   \item[] \textbf{Perform the following actions for $\forall i \in L_I$:}
   \item state.i.rb = UNDECIDED, \ \ 
   $\forall rb \in RB$; 
   \item[] \emph{Step 1: Priority Calculation}
   \State $Prio.k.rb.d = Hash(k\oplus d\oplus t\oplus rb)\oplus k\oplus d, \ \ \forall k \in M_i \cup i, \forall rb \in RB, \forall d\in [1,d_k]$;
   \State $Prio.k.rb =  Max_{d=1}^{d_k} Prio.k.rb.d, \forall k \in M_i \cup i, \ \ \forall rb \in RB$;

    \item[] \emph{Step 2: State Selection (i.e., Scheduling)}
    \State done = false;
    \While{done == false}
    \State done = true; 
    \For{each $rb \in RB$ in increasing order of $rb$ ID}
    \If{$d_i > 0$ \&\& state.i.rb == UNDECIDED \&\&  $Prio.i.rb > Prio.k.rb$ for each ACTIVE/UNDECIDED $k \in M_i$}
    \State state.i.rb = ACTIVE;
    \State $d_i=d_i-1$
      \If{$d_i == 0$}
      \State state.i.rb2 = INACTIVE, for each $rb2 \in$
      \State $RB$ where state.i.rb2 == UNDECIDED;
      \EndIf
    \EndIf
    \If{$Prio.i.rb < Prio.k.rb$ for any ACTIVE $k \in M_i$}
    \State state.i.rb = INACTIVE;
    \EndIf 
    \If{state.i.rb == UNDECIDED}
    \State done = false;
    \EndIf 
    \EndFor
    \State Share $state.i.rb, \forall rb \in RB$; update $state.k.rb, \forall k \in M_i, \forall rb \in RB$ based on information from other BSes;
    \EndWhile

   \end{algorithmic}
\end{algorithm}

} 

Similar to the original ONAMA algorithm \cite{liu2015maximal}, the above algorithm can be readily shown to converge for each time slot.  In particular, we have the following:
\begin{theorem} 
The set of all ACTIVE links at each time slot is a maximal set of non-interfering links for each carrier. 
\end{theorem}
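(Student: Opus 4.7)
The plan is to prove the theorem by separately establishing (i) termination of the while loop, (ii) non-interference of the ACTIVE set on each carrier, and (iii) maximality of that set on each carrier. Throughout, I would exploit the fact, noted in the algorithm description, that the XOR operations in the priority formula guarantee that on any carrier $rb$ the priorities $Prio.k.rb$ are pairwise distinct across links $k$, so in particular totally ordered within each interference neighborhood.

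First I would handle termination. For a fixed carrier $rb$, consider the highest-priority UNDECIDED link $i^*$ in any interference-connected component. By distinctness of priorities, no neighbor in $M_{i^*}$ can have a higher priority, so in one pass of the inner \texttt{for} loop the condition that sets state to ACTIVE is satisfied (assuming $d_{i^*}>0$; if $d_{i^*}=0$ at that point, $i^*$ is forced to INACTIVE by the $d_i==0$ branch triggered on the carrier where its demand was exhausted). Once $i^*$ becomes ACTIVE, every UNDECIDED neighbor with lower priority that shares the carrier is forced to INACTIVE by the second \texttt{if}-clause. Iterating this argument over successively lower priorities, every link reaches a terminal state in finitely many passes, which justifies exiting the \texttt{while done == false} loop.

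Second, for non-interference on carrier $rb$, I would argue by contradiction. Suppose two interfering links $i, j$ (so $j\in M_i$ and $i\in M_j$) are both ACTIVE on $rb$. By the priority distinctness, assume WLOG $Prio.i.rb > Prio.j.rb$. Consider the iteration in which $j$ was first set ACTIVE. At that moment, $i$ was either ACTIVE or UNDECIDED (INACTIVE is impossible, because the state transitions only UNDECIDED$\to$ACTIVE or UNDECIDED$\to$INACTIVE, and if $i$ ever became INACTIVE on $rb$ it would not later appear as ACTIVE). In either case, $i$ is an ACTIVE/UNDECIDED member of $M_j$ with $Prio.i.rb > Prio.j.rb$, which violates the guard in the line that sets $j$'s state to ACTIVE. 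This contradiction completes the non-interference step; the conclusion also relies on BSes sharing $state.i.rb$ with neighboring BSes, so that an interfering link in another cell is indeed visible in $M_j$.

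Third, for maximality on carrier $rb$, I would show that if the algorithm terminates with some link $\ell$ INACTIVE on $rb$, then adding $\ell$ to the ACTIVE set would either (a) create an interference conflict on $rb$, or (b) correspond to $\ell$ having no remaining traffic demand at the moment it was declared INACTIVE (so $\ell$ had already been scheduled to meet its demand on other carriers and has nothing to transmit on $rb$ anyway). Tracing the two places where state is set to INACTIVE in Algorithm~\ref{algo:mcONAMA}, case (a) is exactly the second \texttt{if} (some ACTIVE $k\in M_\ell$ dominates $\ell$ in priority), and case (b) is the $d_i==0$ branch; in the latter case $\ell$'s demand was saturated, so the ACTIVE set is maximal among links that still have packets to send on $rb$. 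Combined with the non-interference property, this yields a maximal set of non-interfering, transmit-ready links per carrier as claimed.

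The main obstacle I expect is cleanly formalizing maximality across carriers, because $d_i$ is decremented inside the per-carrier loop and shared with neighbors asynchronously: I need to argue carefully that the monotone decrease of $d_i$ together with the priority-ordered convergence above prevents any oscillation in which a link is demoted to INACTIVE only to be re-promotable later; the distinctness of priorities and the one-way nature of the state transitions are the tools that close this gap.
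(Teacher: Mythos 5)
Your proof is correct and follows the same skeleton as the paper's: at termination every link is decided, the ACTIVE set is conflict-free because the activation guard requires a link to dominate every ACTIVE/UNDECIDED interferer in priority, and maximality follows because any INACTIVE link is blocked by a higher-priority ACTIVE neighbor on that carrier. The paper's own proof is a three-sentence sketch of exactly this; you additionally supply the termination argument (which the paper defers to the original ONAMA reference) and an explicit contradiction argument for non-interference. The one substantive place where you go beyond the paper is the $d_i = 0$ branch: the paper asserts flatly that \emph{every} INACTIVE link has a higher-priority ACTIVE interfering neighbor, which is false for a link deactivated because its traffic demand was exhausted on earlier carriers. You correctly split the INACTIVE links into the priority-blocked case and the demand-exhausted case, and observe that the theorem then only holds as ``maximal among links with remaining packets to send'' --- a qualification the theorem statement and the paper's proof both omit. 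Your closing worry about oscillation is resolved exactly as you suspect: states only move UNDECIDED $\to$ \{ACTIVE, INACTIVE\} and $d_i$ only decreases, so no link is ever re-promotable; nothing further is needed there.
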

\begin{proof}

When the iteration terminates, a link is either ACTIVE or INACTIVE in each carrier. For each INACTIVE link $i$ in carrier, there always exists an ACTIVE neighboring link in the set, whose priority is higher than that of $i$. Adding any additional link to the set of ACTIVE links for a given time slot - carrier resource block would end up with having two interfering links scheduled to transmit in the same time slot - carrier resource block , which is not allowed. Hence, The set of all ACTIVE links at a time slot $t$ is a maximal set of non-interfering links in each carrier. 

\end{proof}

It usually takes a few rounds of coordination between close-by BSes for Algorithm~\ref{algo:mcONAMA} to converge to a transmission schedule for each time slot. Given that the number of BSes is usually much less than that of UEs, the convergence is much faster than if we have every UE participate in the scheduling process. To make sure that the schedule is readily available when the time reaches a time slot $t$, the schedule for $t$ is pre-computed before time reaches $t$, and the schedules of consecutive time slots are computed in a pipeline manner as in the original ONAMA algorithm \cite{liu2015maximal}. 
	Due to randomization in the above algorithm (e.g., in computing priorities), a link may well use different carriers across different time slots, and this channel hopping behavior can help improve resilience against channel fading and external interference. Additionally, the above algorithm considers traffic demands of different links, and a link with higher demands is more likely to get the highest priority to transmit.

\section{Experimental Evaluation} \label{sec:simulationEval}
We have implemented the UCS framework in the OpenAirInterface cellular network platform \cite{OAI}. In what follows, we first evaluate the feasibility and effectiveness of UCS using a small-scale indoor deployment of software-defined-radio (SDR) implementation of UCS, and then we evaluate UCS through at-scale, high-fidelity simulation.

\subsection{SDR Prototyping and Evaluation}
\subHeading{Hardware platform}
We have implemented the UCS framework using the OpenAirInterface platform. OpenAirInterface is an open-source prototyping and experimentation platform for cellular networks, and it can be used for system simulation, system emulation, and real-world deployment and measurement \cite{OAI, virdis2015performance}. The protocol stack of OpenAirInterface is shown in Figure~\ref{oai_stack}, and we have implemented the UCS framework by modifying its MAC component.

\begin{figure}[!htbp]
	\vspace*{-0.05in}
\centering
\includegraphics[width=2.5in]{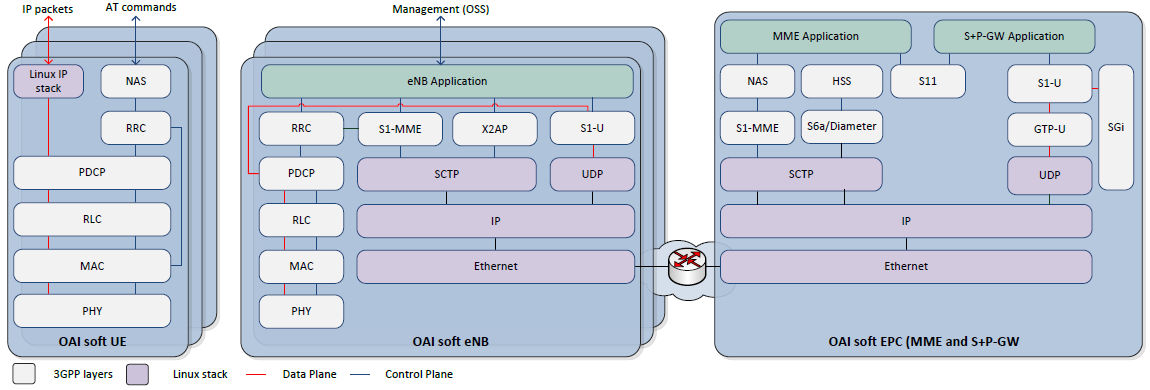}
\caption{The OpenAirInterface protocol stack}\label{oai_stack}
\vspace*{-0.1in}
\end{figure}  

To validate the feasibility of realizing the UCS framework with the LTE-compatible OpenAirInterface platform and commodity hardware, we deploy our OpenAirInterface implementation with the USRP B210 software-defined-radio (SDR) hardware in a single-cell cellular network. In the network, one Dell desktop with USRP B210 serves as the BS, and three Intel NUC mini PCs with USRP B210 serve as UEs. UE1 communicates with the BS directly, forming a cellular link. UE2 and UE3 are configured to communicate with each other, forming a UE-to-UE communication pair. The UE-to-UE communication can be in cellular mode or D2D mode, and the decision is to be made by UCS. The network is deployed inside a university building, with UEs randomly distributed such that the signal power attenuation between the UEs and BS is about 50dB. The packet-delivery-reliability (PDR) requirement for the communications is set as 90\%. 
The measurement study setup is showed in Figure~\ref{environment}.

\begin{figure}[!t]
\begin{minipage}[t]{0.48\linewidth}
	\centering
	\includegraphics[width=.69\linewidth]{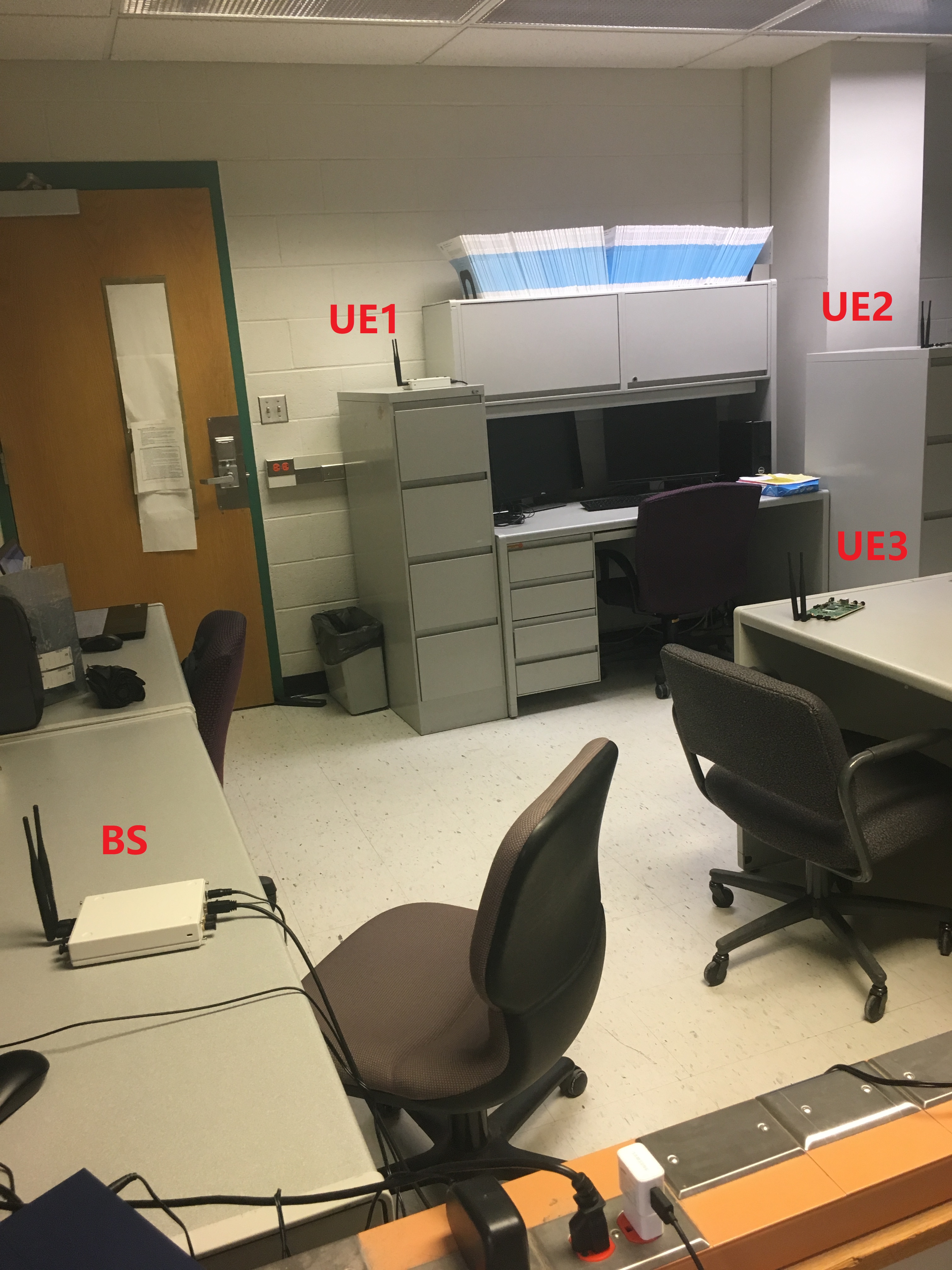}
	\caption{Hardware test environment}\label{environment}
\end{minipage}
\begin{minipage}[t]{0.48\linewidth}
	\centering
	\includegraphics[width=2.0in]{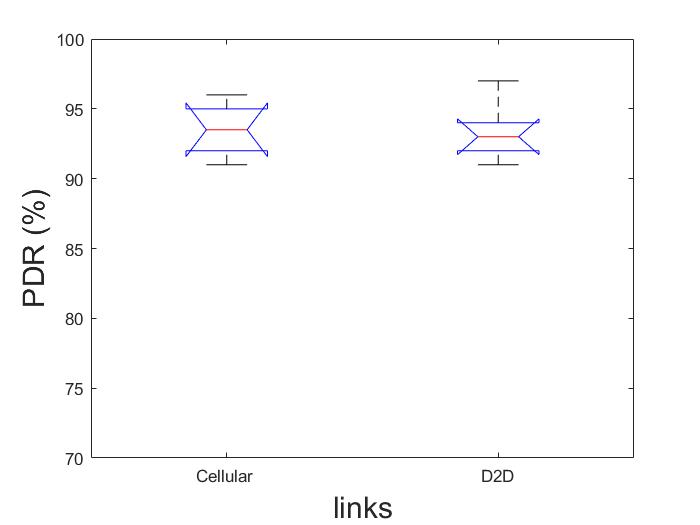}
	\caption{Reliability guarantee in the single cell LTE network}\label{measurement}
\end{minipage}
\end{figure}

\subHeadingS{Measurement results}
With the network setup mentioned above, the UE-to-UE communication pair chooses to operate in the D2D mode, thus forming a D2D link in the network. Figure~\ref{measurement} shows the packet-delivery-reliability (PDR) across the cellular link and D2D link. We see that both links have a PDR above the required reliability of 90\%. 
	In the prototyping cellular network, the deployment of the cellular link and D2D link is such that they interfere with each other, and UCS is able to discover the interference relation and schedule the two links so that they do not use the same carrier at the same time. Due to the availability of multiple carriers, however, the two links are able to transmit concurrently with each other at different channels, with channel hopping exercised along the way.
(Besides the communication reliability requirement of 90\%, we have experimented with other reliability requirements such as 80\% and 70\%, and we have observed similar system behavior.)

\subsection{Simulation Evaluation}
Having validated the feasibility of realizing the UCS framework using commodity software and hardware platforms, we use the simulator of OpenAirInterface to evaluate the performance of UCS with at-scale, high-fidelity simulation. 

\subHeading{Simulation platform}
OpenAirInterface comes with a high-fidelity simulator for OpenAirLTE networks. The OpenAirInterface simulator enables simulation with the full PHY layer and synthetic radio channels, or with a PHY layer abstraction. In both PHY layer simulation modes, the full protocol stack of the UCS-variant of LTE is executed as is the case with USRP B210 hardware implementation and validation. The PHY layer of the OpenAirInterface provides the abstraction for all the transmission channels in LTE and the simulation for a large amount of channel types (e.g., Rayleigh, Rice and AWGN) with different parameter settings, which can cover a variety of actual scenarios. Consequently, the simulation in OpenAirInterface has a high degree of fidelity.

\subHeading{Simulation Scenarios} 
We focus on multi-cell scenarios where a total of 135 UEs are distributed in 9 cells, which are organized in a $3\times 3$ grid manner such that each cell covers a square area of $500m \times 500m$ and the 9 cells covers a square area of $1,500m \times 1,500m$. Each cell has 15 UEs deployed; five UEs are randomly chosen to communicate with the BS directly, forming five cellular links, with both uplink and downlink data transmissions; the remaining 10 UEs form five UE-to-UE communication pairs, with the source UE and destination UE randomly picked. Each UE-to-UE communication pair can be in cellular mode or D2D mode, with the specific mode to be selected by UCS. 

To understand the behavior of UCS in different network settings, we experiment with both random and grid network topologies, where the UEs of each cell are spatially distributed in a uniform-random and grid manner respectively. The base station (BS) of each cell is located at the center of the cell. We also experiment with different LTE  channel models of different path-loss exponents and fading models. To study UCS' capability of ensuring predictable control of interference and thus predictable communication reliability, we experiment with different packet-delivery-reliability (PDR) requirements of 80\%, 85\%, 90\%, and 95\%. 

The main parameters used in a typical simulation are summarized in Table~\ref{table:simulationParameters}. To understand the behavior of UCS in heterogeneous network settings, we also study the scenario where each UE randomly chooses a PDR requirement of 80\%, 85\%, 90\%, or 95\%, as well as the scenario where each transmitter randomly chooses a transmission power of 15dBm, 20dBm, or 25dBm.

\begin{table}[!htbp]
	\vspace*{-0.05in}
\caption{Simulation parameters used in OpenAirInterface simulation}
\label{table:simulationParameters}
\centering
\begin{tabular}{|c|c|}
\hline
\textbf{Parameter}&\textbf{Value}\\
\hline
\# of eNBs (i.e., base stations)&9\\
\hline
\# of UEs&135\\
\hline
Area of each cell&$500m\times 500m$\\
\hline
Topology&random, grid\\
\hline
\# of available carriers&25, 50, 100\\
\hline
Path-loss exponent&3.0, 3.5, 4.0\\
\hline
Fading type&Rayleigh, Rice, AWGN\\
\hline
eNB transmission power & 40dBm \\
\hline
UE transmission power & 20dBm  \\ 
\hline
Frequency band&band 7 (center freq.: 2.6GHz)\\
\hline
Simulation duration&10000 TTIs\\
\hline
PHY layer abstraction&Yes\\
\hline
Traffic type&Full buffer\\
\hline
Mobility&Static\\
\hline
\# of runs per expt$.$ config$.$&10\\
\hline
\end{tabular}
\vspace*{-0.1in}
\end{table}

\subHeading{Scheduling Protocols}
To understand the effectiveness of the UCS scheduling framework in ensuring predictable communication reliability and high channel spatial reuse, we compare it with the following scheduling protocols:
\begin{itemize}
\item \emph IAS: a interference-aware scheduling scheme that exploits the multi-user diversity of the cellular network such that the performance of the D2D underlay is optimized while maintaining a target performance level of the cellular network. The D2D terminals sense the radio spectrum and aid the BS in generating local awareness of the radio environment. The BS then uses this information in interference-aware resource allocation among the cellular and D2D links \cite{janis2009interference}.
\item \emph QAS: a QoS-aware scheduling scheme that utilizes channel statistical characteristics to maximize the overall throughput of the cellular users and admissible D2D pairs while guaranteeing a target signal-to-interference-plus-noise ratio (SINR) for each receiver. A D2D receiver only feeds back the channel-state-information (CSI) for a few best potential partner cellular users to reduce feedback overhead \cite{feng2016qos}.
\end{itemize} 

For understanding the optimality of the UCS scheduling framework, we also compare it with \emph{iOrder} \cite{che2014case}, a state-of-the-art centralized scheduling scheme that maximizes channel spatial reuse while ensuring the required PDRs by considering both the interference budget (i.e., tolerable interference power at receivers) and queue length in scheduling. When constructing the schedule for a time slot, iOrder first picks a 
link with the maximum number of queued packets; then iOrder adds links to the slot one at a time in a way that maximizes the interference budget at each step; this process repeats until no additional link can be added to the slot in any communication channel without violating the application requirement on link reliability \cite{che2014case}. When experimenting with iOrder, we assume that all the channel state information is available (which is unrealistic but serves as a reference for understanding the optimality of UCS), and we use the same set of cellular links and D2D links as those in the experiments for UCS. 
(Note: the original iOrder algorithm was designed for single-channel settings. We extend it to multi-channel settings in this study by using the core idea of the iOrder algorithm in centrally scheduling concurrent transmissions for each channel.)

\begin{figure*}
	\begin{minipage}[t]{0.23\linewidth}
		\centering
		\includegraphics[width=0.99\linewidth]{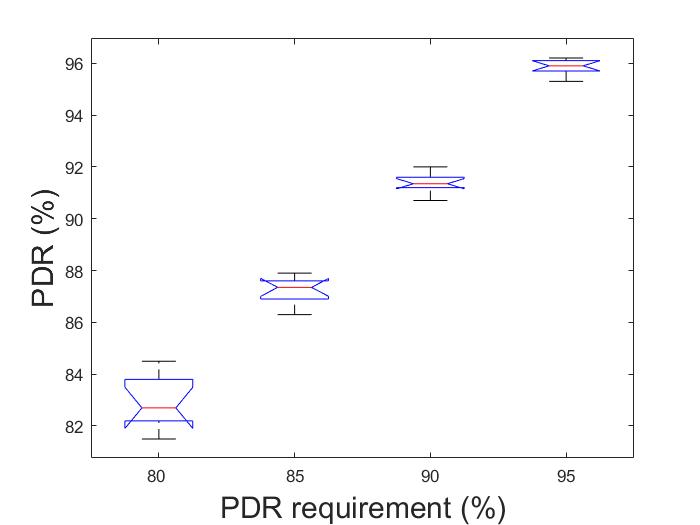}
		\caption{Same $K$ for adjacent 25 carriers}\label{25}
	\end{minipage}  
	\hfill
	\begin{minipage}[t]{0.23\linewidth}
		\centering
		\includegraphics[width=0.99\linewidth]{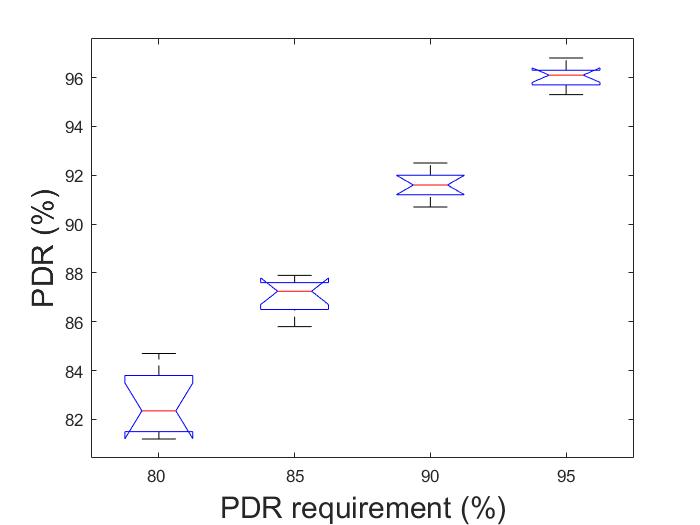}
		\caption{Same $K$ for adjacent 50 carriers}\label{50}
	\end{minipage}  
	\hfill
	\begin{minipage}[t]{0.23\linewidth}
		\centering
		\includegraphics[width=0.99\linewidth]{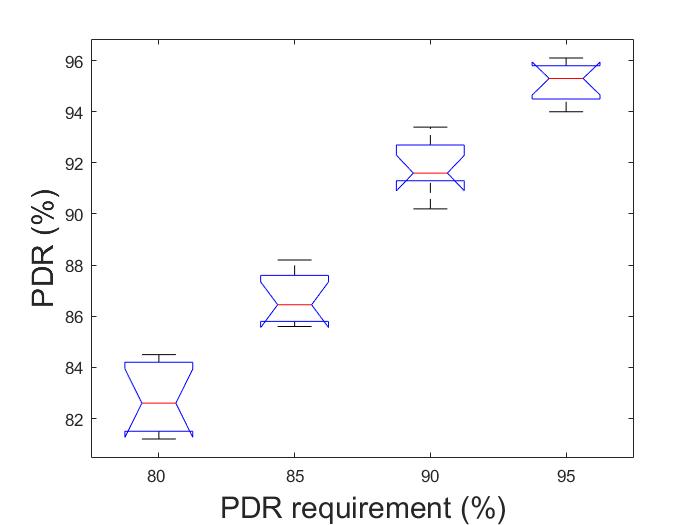}
		\caption{Same $K$ for adjacent 100 carriers}\label{100}
	\end{minipage} 
	\hfill
	\begin{minipage}[t]{0.23\linewidth}
		\centering
		\includegraphics[width=0.99\linewidth]{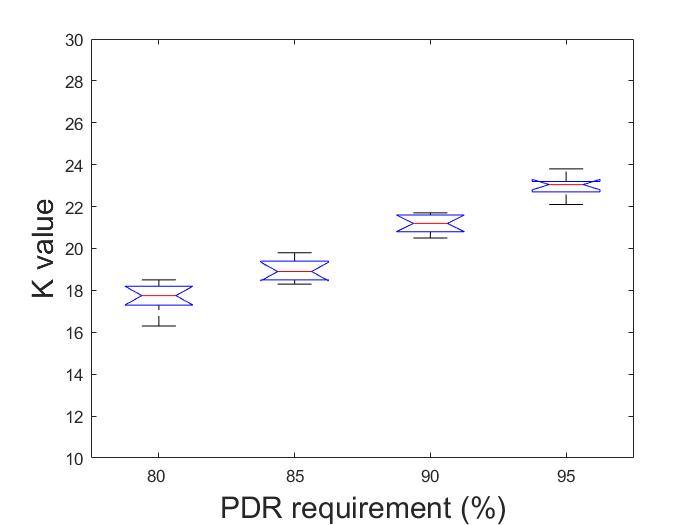}
		\caption{The value of the parameter $K$}\label{Kvalue}
	\end{minipage} 
	
	\begin{minipage}[t]{0.23\linewidth}
		\centering
		\includegraphics[width=0.99\linewidth]{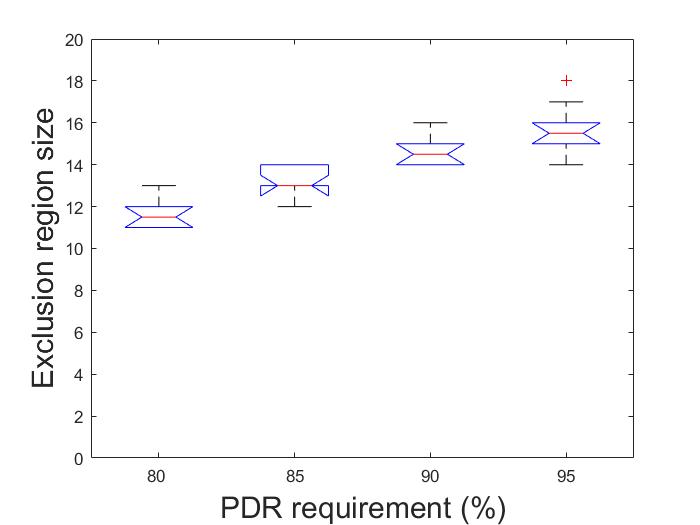}
		\caption{Exclusion region size}\label{ERsize}
	\end{minipage}
	\hfill
	\begin{minipage}[t]{0.23\linewidth}
		\centering
		\includegraphics[width=0.99\linewidth]{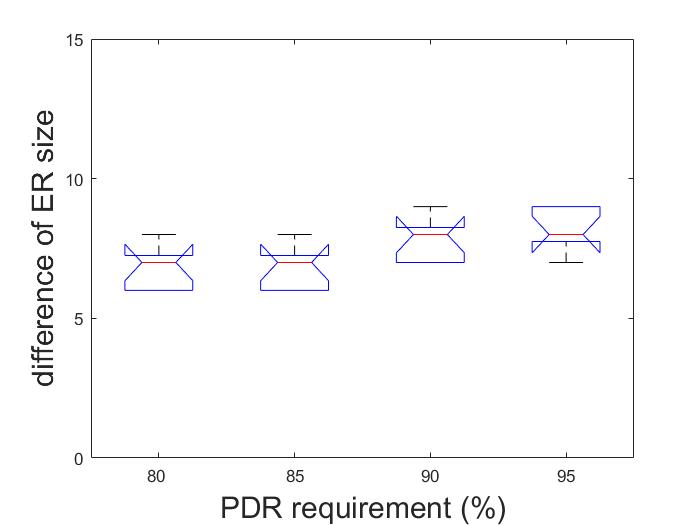}
		\caption{Comparison of ER size in D2D and cellular modes for UE-to-UE pairs choosing D2D mode}\label{difference}
	\end{minipage} 
	\hfill
	\begin{minipage}[t]{0.23\linewidth}
		\centering
		\includegraphics[width=0.99\linewidth]{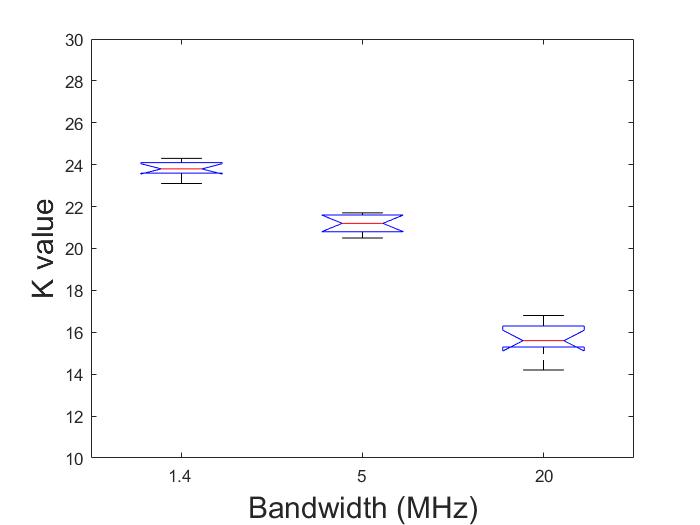}
		\caption{The value of the parameter $K$ for different bandwidth}\label{Kvalueband}
	\end{minipage}
	\hfill
	\begin{minipage}[t]{0.23\linewidth}
		\centering
		\includegraphics[width=0.99\linewidth]{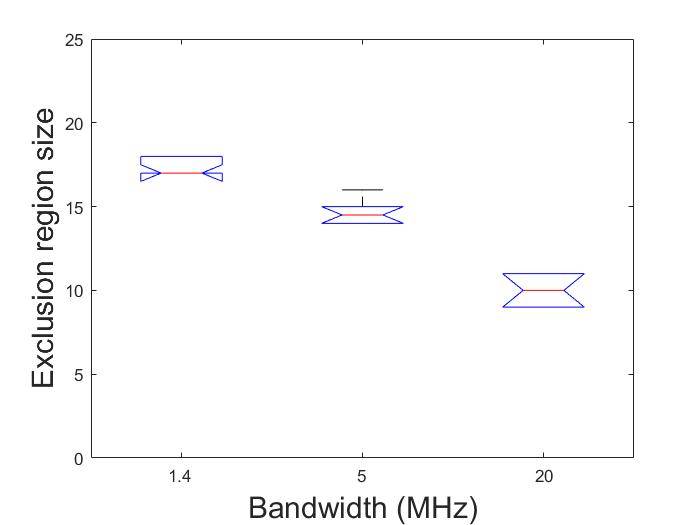}
		\caption{Exclusion region size for different bandwidth}\label{ERsizeband}
	\end{minipage} 
	\hfill
    \begin{minipage}[t]{0.23\linewidth}
	\centering
	\includegraphics[width=.99\linewidth]{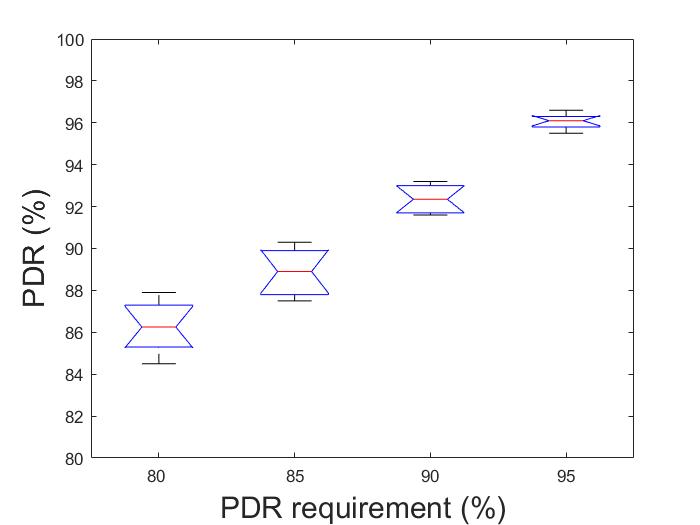}
	\caption{Communication reliability in heterogeneous reliability scenario}\label{mixreliability}
    \end{minipage}
    \hfill
    \begin{minipage}[t]{0.23\linewidth}
	\centering
	\includegraphics[width=.99\linewidth]{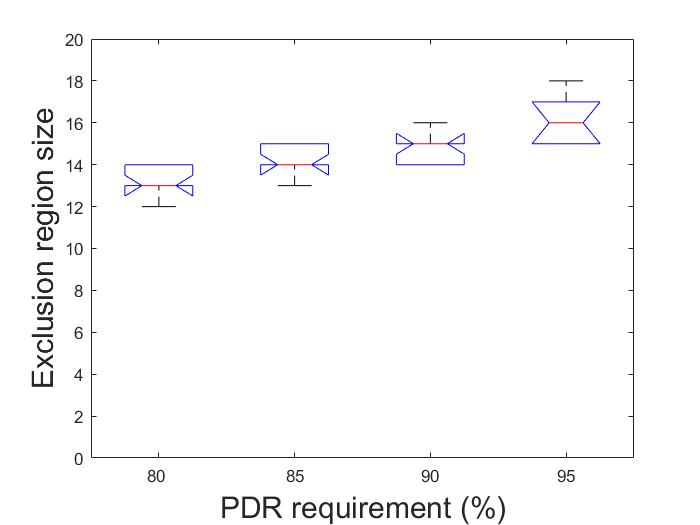}
	\caption{Exclusion region size in heterogeneous reliability scenario}\label{ERmixedPDR}
    \end{minipage}
	\hfill
	\begin{minipage}[t]{0.23\linewidth}
	\centering
	\includegraphics[width=.99\linewidth]{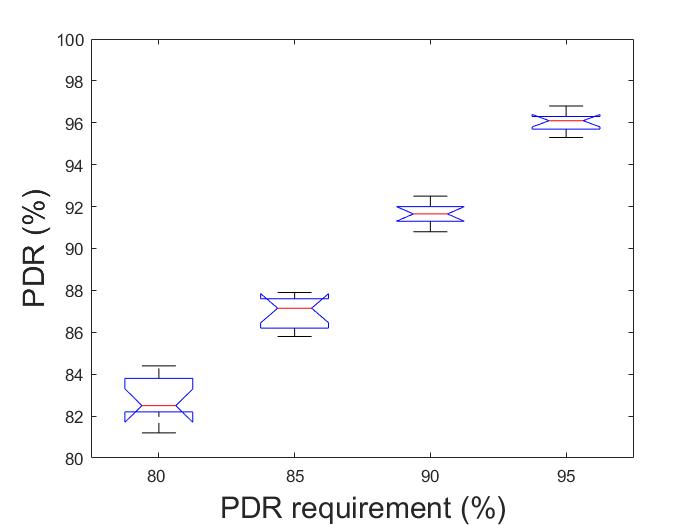}
	\caption{Communication reliability in heterogeneous power scenario}\label{mixpower}
	\end{minipage}
	\hfill
	\begin{minipage}[t]{0.23\linewidth}
	\centering
	\includegraphics[width=.99\linewidth]{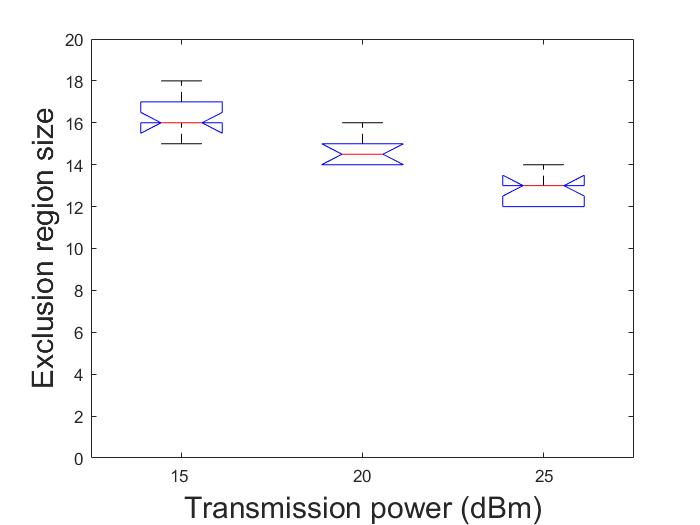}
	\caption{Exclusion region size in heterogeneous power scenario}\label{ERmixedpower}
	\end{minipage}
	\hfill
\\
	\begin{minipage}[t]{0.23\linewidth}
	\centering
	\includegraphics[width=.99\linewidth]{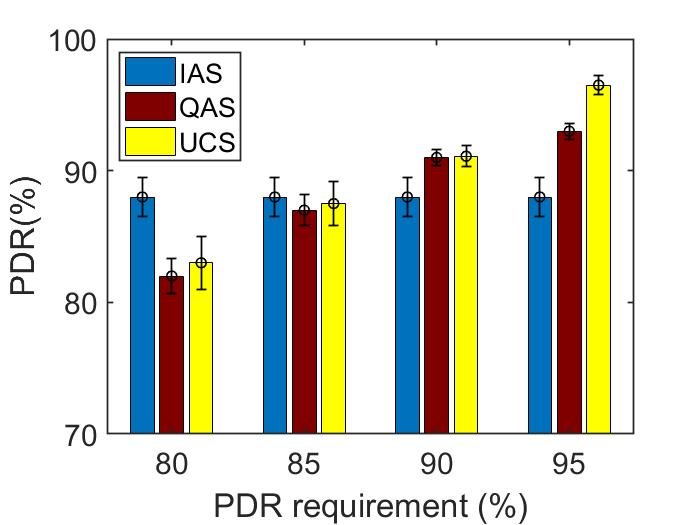}
	\caption{Predictable reliability guarantee for random topology}\label{reliability}
\end{minipage}
\hfill
\begin{minipage}[t]{0.23\linewidth}
	\centering
	\includegraphics[width=.99\linewidth]{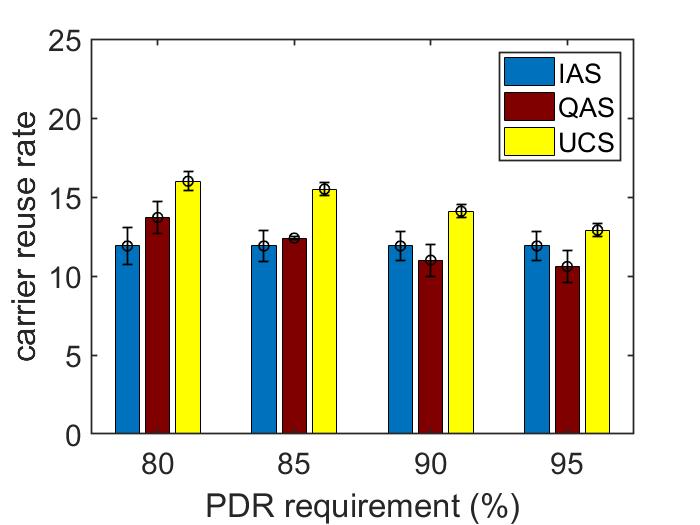}
	\caption{Mean carrier reuse rate with the random topology}\label{randomreuse}
\end{minipage}
\hfill
\begin{minipage}[t]{0.23\linewidth}
	\centering
	\includegraphics[width=.99\linewidth]{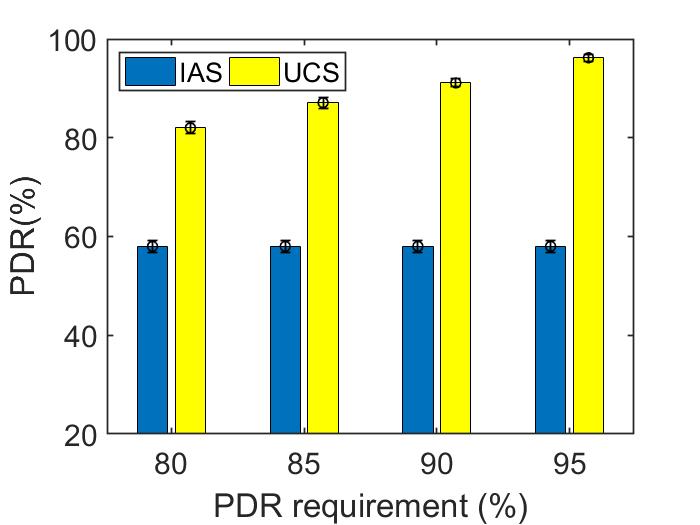}
	\caption{Predictable reliability guarantee comparison with IAS}\label{reliabilityrange}
\end{minipage}
\hfill
\begin{minipage}[t]{0.23\linewidth}
	\centering
	\includegraphics[width=.99\linewidth]{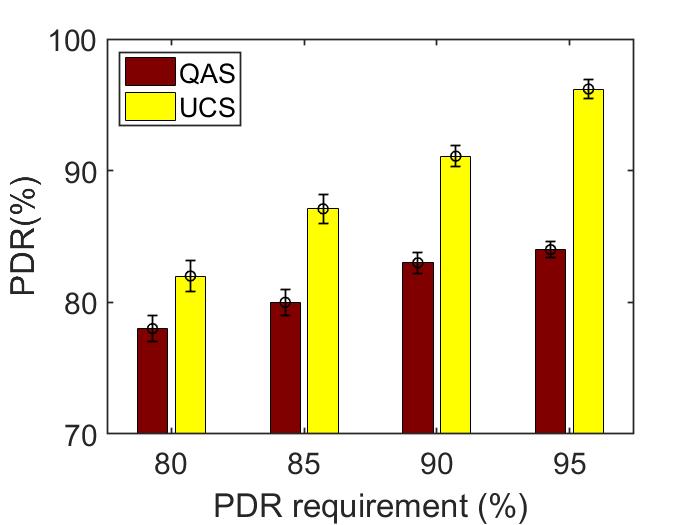}
	\caption{Predictable reliability guarantee comparison with QAS}\label{reliabilitychannel}
\end{minipage}
\vspace*{-0.1in}
\end{figure*}

\begin{figure}[!t]
\begin{minipage}[t]{0.48\linewidth}
	\centering
	\includegraphics[width=.99\linewidth]{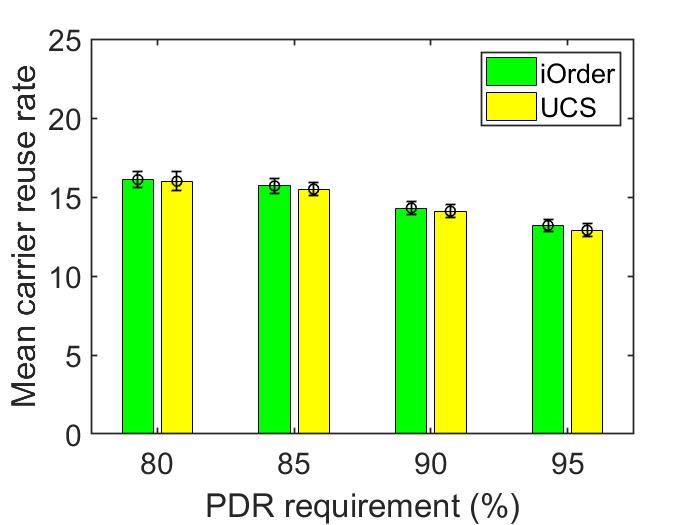}
	\caption{Comparison with iOrder under the random topology}\label{randomreusemax}
\end{minipage}
\begin{minipage}[t]{0.48\linewidth}
	\centering
	\includegraphics[width=.99\linewidth]{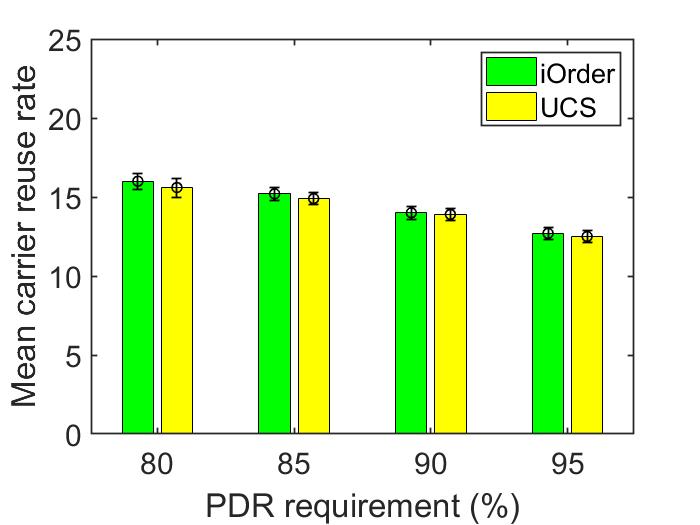}
	\caption{Comparison with iOrder under the grid topology}\label{gridreusemax}
\end{minipage}
\end{figure}

\subHeading{Simulation results}
To understand UCS' capability in ensuring predictable communication reliability, we measure the actual packet-delivery-reliability (PDR) when using a same PRK model parameter $K$ for different number of carriers (i.e., 25, 50 and 100). From Figures~\ref{25} and ~\ref{50}, we see that maintaining one $K$ for 25 and 50 adjacent carriers can ensure application required PDRs. Figure~\ref{100} shows that maintaining one $K$ for 100 adjacent carriers also can also guarantee PDRs up to 90\%, but it does not always ensure the required high PDR of 95\%. We see that, by choosing the right number $n$ of carriers/channels for which to maintain a single PRK model parameter $K$, UCS can ensure the required PDRs. The fact that too large a $n$ (i.e., 100 in this study) cannot always ensure the required high PDRs also demonstrate the tradeoff between control signaling overhead, modeling accuracy, and protocol performance, as discussed in Section~\ref{PRK adaptation}. 
For the figures in the rest of this section, we use the data for scenarios when UCS maintains a PRK model parameter $K$ for every 25 adjacent carriers. 

The reason why UCS ensures the required communication reliability is because it adapts the PRK model parameter and the thus the size of the exclusion region (ER) around each receiver (i.e., number of nodes in the ER) according to the application-required PDR and in-situ network and environmental conditions. As shown by Figures~\ref{Kvalue} and~\ref{ERsize} for cellular links and D2D links respectively, higher PDRs are achieved by increasing the value of parameter $K$ and thus expanding the ER size.

To understand the frequency reuse efficiency for the mode selection algorithm of the UCS framework, Figure~\ref{difference} shows, for the set of UE-to-UE communication pairs operating in the D2D mode, the ER size in the cellular mode minus that in the D2D mode, with the ER size for the cellular mode calculated as the sum of the ER sizes of the involved uplink and downlink. We see that, for these communication pairs, operating in the D2D mode significantly reduces the ER size and thus improves the transmission concurrency and channel spatial reuse. 

To understand the effect of the number of available carriers on the behavior of UCS, we compared the value of parameter $K$ and the size of exclusion region when different communication bandwidth is used. For instance, Figure~\ref{Kvalueband} and~\ref{ERsizeband} shows the values of K and ER size when the communication PDR requirement is 90\%. With the increase of available carriers, the interference between concurrent transmitting links decreases. Accordingly, UCS reduces the size of exclusion regions by adjusting the value of parameter $K$ dynamically to ensure the required communication reliability  while maximizing carrier spatial reuse.

For the heterogenous network setting where each link randomly chooses a PDR requirement, Figures~\ref{mixreliability}  and~\ref{ERmixedPDR} show the communication reliability and corresponding ER size for each reliability requirement. We see that UCS ensures application-required PDR in these settings too. Compared with the homogeneous scenario where each link has the same PDR requirement, the average communication PDR of some links with lower reliability requirements (e.g., 80\%) is a little higher. This is because the heterogeneous setting has more links with higher reliability requirements, which tend to have larger ERs as reflected in Figure~\ref{ERmixedPDR}.

For the heterogenous network setting where each transmitter randomly picks a transmission power, Figure~\ref{mixpower} shows the communication reliability in UCS. We see that UCS ensures application-required PDR. For the communication pairs with PDR requirement of 90\%, Figure~\ref{ERmixedpower} shows the ER size for links with different transmission powers. We see that transmission power can influence the ER size, and thus affecting communication concurrency and throughput. The links with the lower transmission power tend to tolerate lower interference power, thus they maintain larger ERs to guarantee the required communication reliability.

Figures~\ref{reliability} and~\ref{randomreuse} show the PDR and carrier reuse rate in different protocols respectively. We see that the existing cellular protocols IAS and QAS cannot ensure predictable interference control (as explained in Section~\ref{sec:introduction}) and thus cannot ensure predictable communication reliability, for instance, not able to ensure the required high reliability of 95\%. 

To highlight the advantages of UCS in the predictable communication reliability guarantee, we further compared UCS with IAS and QAS separately in Figure~\ref{reliabilityrange} and Figure~\ref{reliabilitychannel} in different network setting. In Figure~\ref{reliabilityrange}, we reduced the area of each cell to $150m*150m$ to increase the interference of concurrent transmission links; IAS cannot even guarantee the communication reliability up to 60\% because it only avoids the strong interference, not considering the interference accumulation.  
In Figure~\ref{reliabilitychannel}, we changed the channel type from Rayleigh model to Rice model; QAS cannot guarantee the communication reliability up to 85\% because it only targets for Rayleigh model, not considering potentially different channel models and reality. In contrast, UCS ensures the required PDR for different cell ranges and channel models while achieving a higher carrier reuse rate (which is defined as the number of links using a carrier at each time slot). In fact, Figures~\ref{randomreusemax} and \ref{gridreusemax} show the distributed UCS scheduling framework achieves a carrier reuse rate statistically equal to that of the centralized, state-of-the-art scheduling algorithm iOrder, showing the optimality of the UCS framework \footnote{The carrier reuse rates in the random and grid networks are statistically similar.}.

\section{Conclusion}\label{sec:conclusion}
We have proposed a field-deployable, unified cellular scheduling framework UCS to ensure predictable communication reliability in cellular networks with D2D links. The UCS framework effectively leverages the PRK interference model and addresses the challenges of multi-channel PRK-based scheduling in cellular networks. UCS provides a simple mode selection mechanism which, together with PRK-based cellular scheduling, maximizes communication throughput while ensuring communication reliability. UCS also effectively leverages cellular network structures (e.g., the availability of BSes and high-speed, out-of-band networks in-between) to orchestrate BS and UE functionalities for light-weight control signaling and ease of incremental deployment and technology evolution. The feasibility and performance of UCS have been verified through high-fidelity simulation and real-world hardware and software implementation.

{\small
\bibliography{ref}

\begin{thebibliography}{10}
\providecommand{\url}[1]{#1}
\csname url@samestyle\endcsname
\providecommand{\newblock}{\relax}
\providecommand{\bibinfo}[2]{#2}
\providecommand{\BIBentrySTDinterwordspacing}{\spaceskip=0pt\relax}
\providecommand{\BIBentryALTinterwordstretchfactor}{4}
\providecommand{\BIBentryALTinterwordspacing}{\spaceskip=\fontdimen2\font plus
\BIBentryALTinterwordstretchfactor\fontdimen3\font minus
  \fontdimen4\font\relax}
\providecommand{\BIBforeignlanguage}[2]{{%
\expandafter\ifx\csname l@#1\endcsname\relax
\typeout{** WARNING: IEEEtran.bst: No hyphenation pattern has been}%
\typeout{** loaded for the language `#1'. Using the pattern for}%
\typeout{** the default language instead.}%
\else
\language=\csname l@#1\endcsname
\fi
#2}}
\providecommand{\BIBdecl}{\relax}
\BIBdecl

\bibitem{Erik-LTE-book:2016}
E.~Dahlman, S.~Parkvall, and J.~Skold, \emph{{4G, LTE-Advanced Pro and The Road
  to 5G}}.\hskip 1em plus 0.5em minus 0.4em\relax Academic Press, 2016.

\bibitem{5G-MTC:overview}
S.~Essakiappan, S.~Harb, and A.~Solar-schultz, ``{Machine-Type Communications:
  Current Status and Future Perspectives Toward \textsc{5G} Systems},''
  \emph{IEEE Communications Magazine}, September 2015.

\bibitem{5G-D2D:overview}
M.~N. Tehrani, M.~Uysal, and H.~Yanikomeroglu, ``{Device-to-device
  communication in \textsc{5G} cellular networks: challenges, solutions, and
  future directions},'' \emph{IEEE Communications Magazine}, may 2014.

\bibitem{PRKS}
H.~Zhang, X.~Liu, C.~Li, Y.~Chen, X.~Che, L.~Y. Wang, F.~Lin, and G.~Yin,
  ``Scheduling with predictable link reliability for wireless networked
  control,'' \emph{IEEE Transactions on Wireless Communications}, vol.~16,
  no.~9, 2017.

\bibitem{PRK}
H.~Zhang, X.~Che, X.~Liu, and X.~Ju, ``Adaptive instantiation of the protocol
  interference model in wireless networked sensing and control,'' \emph{ACM
  Transactions on Sensor Networks (TOSN)}, vol.~10, no.~2, 2014.

\bibitem{control-comm-netRT}
J.~R. Moyne and D.~M. Tilbury, ``Control and communication challenges in
  networked real-time systems,'' \emph{Proceedings of the IEEE}, vol.~95,
  no.~1, 2007.

\bibitem{Hyunkee:capacity-d2d}
H.~Min, J.~Lee, S.~Park, and D.~Hong, ``Capacity enhancement using an
  interference limited area for \textsc{D}evice-to-\textsc{D}evice uplink
  underlaying cellular networks,'' \emph{IEEE Transactions on Wireless
  Communications}, vol.~10, no.~12, 2011.

\bibitem{Qiaoyang:resource-d2d}
Q.~Ye, M.~Al-Shalash, C.~Caramanis, and J.~G. Andrews, ``Distributed resource
  allocation in \textsc{D}evice-to-\textsc{D}evice enhanced cellular
  networks,'' \emph{IEEE Transactions on Communications}, vol.~63, no.~2, 2015.

\bibitem{Jeff:D2D-resource-allication-distributed}
Q.~Ye, M.~Al-shalash, C.~Caramanis, and J.~A. and, ``{Distributed Resource
  Allocation in Device-to-Device Enhanced Cellular Networks},'' \emph{IEEE
  Transactions on Communications}, vol.~63, no.~2, 2015.

\bibitem{Xuemin:resource-control-d2d}
L.~Lei, Y.~Kuang, X.~Shen, C.~Lin, and Z.~Zhong, ``Resource control in network
  assisted \textsc{D}evice-to-\textsc{D}evice communications: Solutions and
  challenges,'' \emph{IEEE Communications Magazine}, vol.~52, no.~6, 2014.

\bibitem{Vincent:dynamic-power}
W.~Wang, F.~Zhang, and V.~K.~N. Lau, ``Dynamic power control for delay-aware
  \textsc{D}evice-to-\textsc{D}evice communications,'' \emph{IEEE Journal on
  Selected Areas in Communications}, vol.~33, no.~1, 2015.

\bibitem{Vincent:delay-aware-control-survey}
Y.~Cui, V.~K.~N. Lau, R.~Wang, H.~Huang, and S.~Zhang, ``{A Survey on
  Delay-Aware Resource Control for Wireless Systems: Large Deviation Theory,
  Stochastic Lyapunov Drift, and Distributed Stochastic Learning},'' \emph{IEEE
  Transactions on Information Theory}, vol.~58, no.~3, 2012.

\bibitem{D2D:Verenzuela2017}
D.~Verenzuela and G.~Miao, ``{Scalable D2D Communications for Frequency Reuse
  {\textgreater}{\textgreater}1 in 5G},'' vol.~16, no.~6, 2017.

\bibitem{D2D:Lv2017}
S.~Lv, C.~Xing, Z.~Zhang, and K.~Long, ``{Guard Zone Based Interference
  Management for D2D-Aided Underlaying Cellular Networks},'' \emph{IEEE
  Transactions on Vehicular Technology}, vol.~66, no.~6, 2017.

\bibitem{doppler2009device}
K.~Doppler, M.~Rinne, C.~Wijting, C.~B. Ribeiro, and K.~Hugl,
  ``Device-to-device communication as an underlay to lte-advanced networks,''
  \emph{IEEE Communications Magazine}, vol.~47, no.~12, 2009.

\bibitem{peng2009interference}
T.~Peng, Q.~Lu, H.~Wang, S.~Xu, and W.~Wang, ``Interference avoidance
  mechanisms in the hybrid cellular and device-to-device systems,'' in
  \emph{Personal, Indoor and Mobile Radio Communications, 2009 IEEE 20th
  International Symposium on}.\hskip 1em plus 0.5em minus 0.4em\relax IEEE,
  2009.

\bibitem{min2011capacity}
H.~Min, J.~Lee, S.~Park, and D.~Hong, ``Capacity enhancement using an
  interference limited area for device-to-device uplink underlaying cellular
  networks,'' \emph{IEEE Transactions on Wireless Communications}, vol.~10,
  no.~12, 2011.

\bibitem{yu2009performance}
C.-H. Yu, K.~Doppler, C.~Ribeiro, and O.~Tirkkonen, ``Performance impact of
  fading interference to device-to-device communication underlaying cellular
  networks,'' in \emph{Personal, Indoor and Mobile Radio Communications, 2009
  IEEE 20th International Symposium on}.\hskip 1em plus 0.5em minus 0.4em\relax
  IEEE, 2009.

\bibitem{janis2009interference}
P.~Janis, V.~Koivunen, C.~Ribeiro, J.~Korhonen, K.~Doppler, and K.~Hugl,
  ``Interference-aware resource allocation for device-to-device radio
  underlaying cellular networks,'' in \emph{Vehicular Technology Conference,
  2009. VTC Spring 2009. IEEE 69th}.\hskip 1em plus 0.5em minus 0.4em\relax
  IEEE, 2009.

\bibitem{feng2016qos}
D.~Feng, L.~Lu, Y.-W. Yi, G.~Y. Li, G.~Feng, and S.~Li, ``Qos-aware resource
  allocation for device-to-device communications with channel uncertainty,''
  \emph{IEEE Transactions on Vehicular Technology}, vol.~65, no.~8, 2016.

\bibitem{4GLTE}
E.~Dahlman, S.~Parkvall, and J.~Skold, \emph{4G LTE/LTE-Advanced for Mobile
  Broadband}.\hskip 1em plus 0.5em minus 0.4em\relax Elsevier Ltd, 2011.

\bibitem{min2011reliability}
H.~Min, W.~Seo, J.~Lee, S.~Park, and D.~Hong, ``Reliability improvement using
  receive mode selection in the device-to-device uplink period underlaying
  cellular networks,'' \emph{IEEE Transactions on Wireless Communications},
  vol.~10, no.~2, 2011.

\bibitem{IndustrialNetBook-CRC}
R.~Z. (Editor), \emph{Industrial Communication Technology Handbook}.\hskip 1em
  plus 0.5em minus 0.4em\relax CRC Press, 2015.

\bibitem{Lu:WirelessHART-Sch}
A.~Saifullah, Y.~Xu, C.~Lu, and Y.~Chen, ``Real-time scheduling in
  \textsc{W}ireless\textsc{HART} networks,'' in \emph{IEEE RTSS}, 2010.

\bibitem{che2014case}
X.~Che, H.~Zhang, and X.~Ju, ``The case for addressing the ordering effect in
  interference-limited wireless scheduling,'' \emph{IEEE Transactions on
  Wireless Communications}, vol.~13, no.~9, 2014.

\bibitem{CPS-IOTDI18}
C.~Li, H.~Zhang, J.~Rao, L.~Y. Wang, and G.~Yin, ``{Cyber-Physical Scheduling
  for Predictable Reliability of Inter-Vehicle Communications},'' in
  \emph{ACM/IEEE IoTDI (short paper)}, 2018.

\bibitem{Zhang:pktReliability}
L.~Wang, H.~Zhang, and P.~Ren, ``Distributed scheduling and power control for
  predictable iot communication reliability,'' in \emph{IEEE ICC}, 2018.

\bibitem{tekin2015online}
C.~Tekin, M.~Liu \emph{et~al.}, ``Online learning methods for networking,''
  \emph{Foundations and Trends{\textregistered} in Networking}, vol.~8, no.~4,
  2015.

\bibitem{liu2015maximal}
X.~Liu, Y.~Chen, and H.~Zhang, ``A maximal concurrency and low latency
  distributed scheduling protocol for wireless sensor networks,''
  \emph{International Journal of Distributed Sensor Networks}, vol.~11, no.~8,
  2015.

\bibitem{OAI}
N.~Nikaein, M.~K. Marina, S.~Manickam, A.~Dawson, R.~Knopp, and C.~Bonnet,
  ``{OpenAirInterface: A Flexible Platform for 5G Research},'' \emph{ACM
  SIGCOMM Computer Communication Review}, vol.~44, no.~5, 2014.

\bibitem{virdis2015performance}
A.~Virdis, N.~Iardella, G.~Stea, and D.~Sabella, ``Performance analysis of
  openairinterface system emulation,'' in \emph{Future Internet of Things and
  Cloud (FiCloud), 2015 3rd International Conference on}.\hskip 1em plus 0.5em
  minus 0.4em\relax IEEE, 2015.

\end{thebibliography}
}
\end{document}